\newtheorem{lemma}{Lemma}[section]
\newtheorem{example}{Example}[section]
\def\BibTeX{{\rm B\kern-.05em{\sc i\kern-.025em b}\kern-.08em
    T\kern-.1667em\lower.7ex\hbox{E}\kern-.125emX}}
\begin{document}

\title{A Multiple Snapshot Attack on Deniable Storage Systems\\
%{\footnotesize \textsuperscript{*}Note: Sub-titles are not captured in Xplore and
%should not be used}
\thanks{This research was supported in part by the National Science Foundation grant number IIP-1266400, award CNS-1814347, and by the industrial partners of the Center for Research in Storage Systems.}
}
\author{\IEEEauthorblockN{Kyle Fredrickson, Austen Barker, and Darrell D. E. Long}
\IEEEauthorblockA{\textit{University of California, Santa Cruz} \\
%\textit{name of organization (of Aff.)}\\
%City, Country \\
%\{kyfredi, atbarker, darrell\}@ucsc.edu
}
}

\newcommand{\etal}{\emph{et~al.}\xspace}
\newcommand{\afs}{Artifice\xspace}
\newcommand{\eg}{\emph{e.g.}\xspace}

\maketitle
\thispagestyle{plain}
\pagestyle{plain}

\begin{abstract}
While disk encryption is suitable for use in most situations where confidentiality of disks is required,
stronger guarantees are required in situations where adversaries may employ coercive tactics to gain access to
cryptographic keys.
Deniable volumes are one such solution in which the security goal is to prevent an adversary from discovering that there is an encrypted volume.
Multiple snapshot attacks, where an adversary is able to gain access to two or more images of a disk, have often been proposed in the deniable storage system literature;
however, there have been no concrete attacks proposed or carried out.
We present the first multiple snapshot attack, and we find that it is applicable to most, if not all, implemented deniable storage systems.
Our attack leverages the pattern of consecutive block changes an adversary would have access to with two snapshots,
and demonstrate that with high probability it detects moderately sized and large hidden volumes, while maintaining a low false positive rate.
\end{abstract}

\begin{IEEEkeywords}
Security, Steganography, Storage
\end{IEEEkeywords}

\section{Introduction}

Conventional disk encryption has been very successful at ensuring the confidentiality and integrity of disks.
In many applications reducing these properties to corresponding key management problems is sufficient to ensure security.
However, when faced with adversaries capable of using coercive tactics to reveal keys, conventional encryption comes up short.
In these situations the possession of an encrypted drive can be considered suspicious enough to justify coercive attacks
euphemistically known as rubber hose attacks.
These situations require that the mere presence of encrypted data be unknown to the adversary to avoid the use of coercive tactics.
Deniable storage systems have been developed in response to this threat.
These systems are used to create a hidden volume on the user's device,
the existence of which is plausibly deniable.
One common strategy among deniable storage systems is to encrypt data and randomly write it throughout the disk's free space.
%I don't think we need to mention the rabbit hole of why we have pseudorandom free space, that is another issue for another paper.
Assuming that the free space is filled with pseudorandom bytes, this renders hidden data indistinguishable from unallocated blocks.
This achieves the goal of deniability when an adversary is restricted to viewing the disk at one point in time and there are no other
sources of information leakage.

A multiple snapshot attack is an attack on a hidden volume where an adversary is able to gain access to a machine and make observations at two or more points in time.
Information gained from comparing these snapshots would then be analyzed for abnormalities that could imply the existence of a hidden volume.
In many circumstances where a deniable volume may be used multiple snapshot attacks are feasible.
As an example, suppose a journalist is entering a repressive country with the intent to exfiltrate some data.
The adversary first takes a snapshot of all devices entering the country,
the journalist collects the data to be exfiltrated and constructs a hidden volume to hide it,
then the adversary takes another snapshot when the device leaves the country.
This leaves the pattern of changes on the file system as a new source of information leakage.

While multiple snapshot attacks on hidden volumes have been described in the literature and defenses against them proposed~\cite{Zhou2004, Blass2014, Chakraborti2017, Chen2019},
to our knowledge there has never been a thoroughly described or attempted example of this class of attack.
Among the reasons for this are not only the difficulty in obtaining enough disk images to establish what a \emph{normal} pattern of changes is,
but also the difficulty in identifying meaningful features in observed change patterns that would be invariant across normal use cases.

Using features derived from the change patterns on disks, we have defined and implemented the first multiple snapshot attack against deniable storage systems.
The main contribution of our work is to affirm the relevance of multiple snapshot attacks to deniable storage systems by demonstrating a practical attack,
while also identifying the limitations of our techniques.
We propose analyzing the distribution on the lengths of consecutive block changes, which we call {\it chains}, as a new metric for quantifying disk behavior,
and leverage this information to distinguish between disks containing hidden volumes and those that do not.
Specifically we attack \afs~\cite{barker-msst20}, a deniable storage system,
but our attack if broadly applicable, and is able to reveal most, if not all, implemented deniable storage systems.
In response to this attack we propose additional design requirements for deniable systems,
and discuss the implications of these requirements on \afs.
Through this work we hope to guide the design and implementation of future systems,
improving their security even against powerful adversaries with the ability to gain access to devices at several points in time.

We first give background on deniable storage systems, going into detail on \afs, and describe past attacks on these systems.
We then propose our attack, describe our data collection and simulation methods, and give results of the attack on our dataset.
Finally, we describe mitigation schemes and conclude.
\section{Background and Related Work}

As previously stated, the goal of a deniable storage system is to conceal the existence of a
volume from an adversary that would otherwise view an encrypted volume as suspicious and consequently
coerce the user to disclose their secret keys.
With a hidden volume the user is able to
reveal keys to known encrypted drives while allowing the user to deny that hidden volumes
exist on the device.

Many deniable or steganographic storage systems have been proposed in past years. One of the key concerns of
these systems has been defending against multiple snapshot attacks. In this section we will
describe existing deniable storage systems and attacks against those systems.

\subsection{Deniable and Steganographic Storage Systems}
Anderson, \etal~\cite{Anderson1998} were the first to propose a steganographic file system and described
two possible approaches. The first approach consists of a set of cover files filled with random information, of which a subset are combined
with hidden files using an additive secret sharing scheme. The second construction hides data within the unallocated space of another file system.
Although the proposal lacked an implementation of the two ideas,
most deniable storage systems follow the second approach.

McDonald and Kuhn implemented Anderson \etal's second scheme as a Linux file system based on
\texttt{ext2} known as StegFS~\cite{McDonald1999}. Although the scheme provides deniability when
the adversary can only view the device once, McDonald \etal's StegFS and similar systems like TrueCrypt~\cite{Truecrypt} or Mobiflage~\cite{Skillen2013}
cannot provide a reasonable defense against an adversary that can view the disk multiple times and compare snapshots.

Pang, \etal~\cite{Pang2003, Zhou2004} implemented a variant of StegFS that attempts to defend against multiple snapshot adversaries
by performing dummy operations to the disk that obscure hidden writes. Similar approaches to this have been implemented
using derivatives of Oblivious RAM~\cite{Goldreich1987, Goldreich1996} or similar techniques to render accesses to a hidden
volume indistinguishable from accesses to the public volume~\cite{Blass2014, Chakraborti2017, Chen2019}. Although these
approaches render hidden and public writes indistinguishable, they incur other abnormalities such as fully random write patterns
not exhibited by common file systems and significant performance losses that would betray the existence of a hidden volume or the ability to construct a hidden volume.

A perennial weakness of many of these systems is that they do not take measures to hide the existence of the software used to access the hidden volume,
often claiming that widespread adoption would ensure the ability to create a hidden volume is not suspicious.
The fact that disk encryption, though widespread, can still be seen as suspicious by many organizations calls this assumption into doubt.

\subsection{Attacks on Steganographic Storage}

Most existing storage systems assume that the adversary has the ability to closely analyze characteristics of the user's device for any
clues that the device contains a hidden volume.
With respect to multiple snapshot analysis and related attacks,
previous work~\cite{Blass2014, Zhou2004} has categorized adversary capabilities into three general categories.

\begin{itemize}
	\item \textbf{Single snapshot} A single snapshot adversary can only view a device once. All existing deniable systems have
		some level of resilience against this sort of attack.
	\item \textbf{Multiple snapshot} In this case the adversary can view the device two or more times. These snapshots
		can be compared, and the changes analyzed for anomalies that could reveal a hidden volume on the user's device.
		It is important to note that in this class of attack the adversary is only able to view static snapshots of the disk
		a limited number of times.
	\item \textbf{Continuous observation} This is when an adversary has the ability to continuously observe writes to the user's device
		or make a snapshot of the device for each write. This adversary capability is sometimes also called \emph{continuous traffic analysis}.
		This sort of attack would likely require a form of malware to be installed on the user's device for the purposes of information
		gathering. OS level spyware would be able to monitor writes to the user's device.
\end{itemize}

Proposed attacks and serious attempts to break deniable storage systems are limited when compared to the variety of
proposed systems. One vulnerability found by Czeskis \etal when analyzing TrueCrypt was its susceptibility to
leaking information from the hidden to public volumes through operating system utilities and common applications such as
word processors. Another work by Troncoso \etal describes and implements a continuous traffic analysis attack against
Pang's StegFS that exploits repeated write patterns that correspond to writes made to a hidden volume. While this
attack shows the effectiveness of monitoring disk operations in finding a hidden volume, it is assumed that the
adversary has enough power to continuously observe the disk.
The multiple snapshot adversary is considerably weaker than the continuous observation adversary,
yet there has been no previous work towards demonstrating a multiple snapshot attack against a deniable storage system.

%The goal of steganographic file systems is to conceal the existence of a file system.
%As in any security context it is important to explain the capabilities of the adversary.

%The weakest adversary in the context of deniable storage is one that can observe the disk only once.
%If the adversary becomes suspicious that there may be an encrypted volume on disk, it is able to
%demand access to the cryptography keys. \afs defends against this adversary by not only hiding data
%in free space, which is presumed to be filled with pseudorandom data, but also all software that may
%indicate the capability of running a steganographic file system.

%An adversary that is able to gain access to multiple disk images is strictly stronger.
%In addition to having multiple snapshots of the disk, the adversary is able to see how the disk is changing.
%It is important to note, however, that these snapshots are taken at discrete points in time.

%The strongest adversary is able to continuously observe a disk, perhaps through malware installed on the user's machine.
%While this is certainly stronger that a multiple snapshot adversary, it has been made apparent by Troncoso et al. that should an adversary be able to continuously leak disk traffic information,
%then it is possible to reveal the existence and location of hidden files \cite{Troncoso2007}.
%Because of this, we argue that the multiple snapshot adversary is the strongest adversary that can be defended against.

\subsection{Artifice}

For the purposes of testing a multiple snapshot attack we will use \afs as our deniable storage system.
\afs follows the common model of hiding information in unallocated blocks but
with a few additional features that address some pitfalls of other existing file systems~\cite{barker-msst20}. Most importantly
it addresses problems with hiding its driver software and provides a layer of protection against malware and
information leakage by putting the \afs driver on a separate Linux live USB drive. To access a hidden volume,
the user boots into an \afs-aware OS contained on this drive instead of the normal public OS. This isolation does
not leave behind suspicious drivers on the user's machine and mitigates the impact of malware and information leakage.

\afs writes data by splitting data blocks into pseudo-random \emph{carrier blocks} using an information
dispersal algorithm (IDA) such as Shamir Secret Sharing~\cite{Shamir1979}. These carrier blocks are then uniformly
distributed throughout the unallocated space of the drive, which is assumed to be full of pseudorandom blocks due
to a secure deletion utility or similarly deniable means.
Since the public file system is not aware of \afs, it is imperative to protect the carrier blocks from accidental overwrite.
IDAs provide \afs overwrite tolerance by writing redundant carrier
blocks in excess of the number needed to normally reconstruct the data.
This allows \afs to carry out self-repair operations whenever accessed by the user
and increases the probability that an \afs instance will survive many writes made by the public file system.

Currently, \afs aims to address the problem of multiple snapshot attacks through writing hidden blocks
under the guise of a suitable deniable operation, such as defragmentation, where the contents of a disk are relocated to be contiguous,
routine file deletion, or by operational security measures that render previous snapshots useless,
such as reinstalling the public operating system or wiping the storage prior to constructing an \afs instance.

\section{Attack Framework} \label{section:attack}

\begin{figure}[t]
    \centering
    \def\svgwidth{\columnwidth}
    \scalebox{1.0}{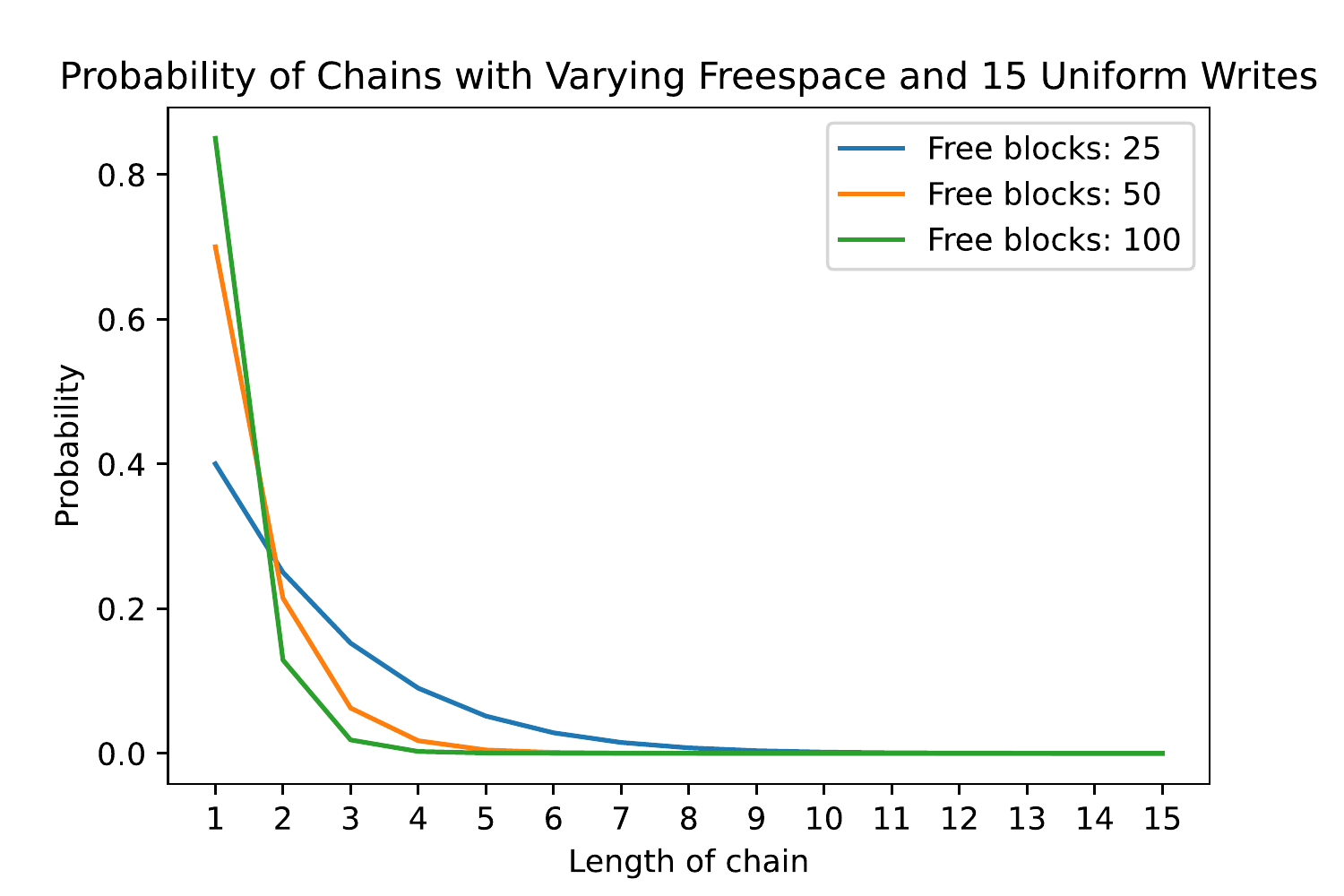}
    \caption{Theoretical probability of consecutive changes when changes are made uniformly. The probability of $c$ consecutive changes degrades very quickly, especially when the free space is large.}
    \label{fig:prob-cons-changes}
\end{figure}

\begin{figure}[t]
    \centering
    \def\svgwidth{\columnwidth}
    \scalebox{1.0}{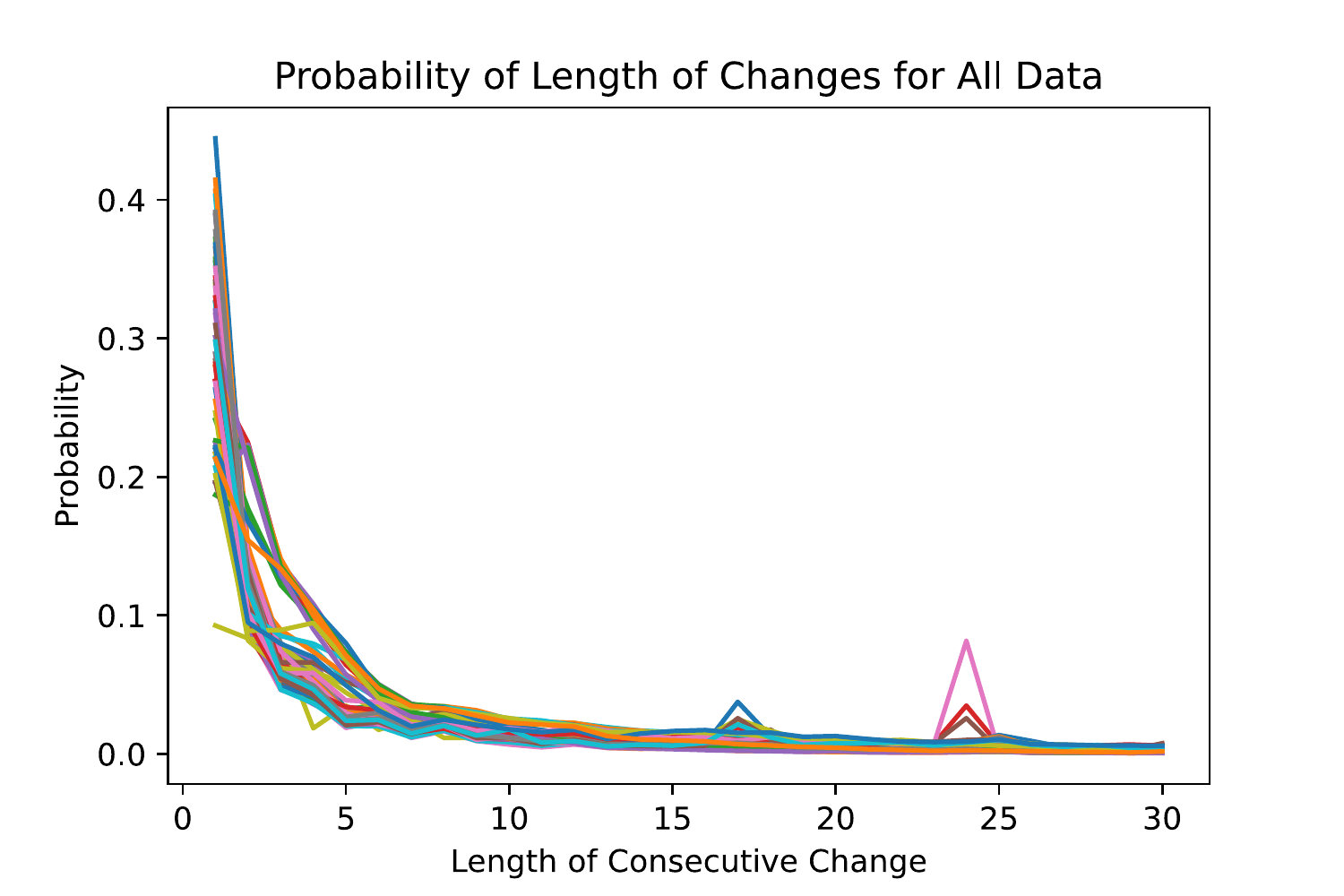}
    \caption{Empirical probability of consecutive changes for 52 change records where the changes are made by an \texttt{ext4} file system.}
    \label{fig:prob-over-dataset}
\end{figure}

The primary goal of an attack on a deniable storage system is not to recover the plaintext files, but to discover the presence of hidden volumes.
In \afs's adversary model, it is assumed that the adversary can coerce a user to reveal their keys provided there is sufficient suspicion that a user
possesses a hidden volume~\cite{barker-msst20}. The success of a particular attack then is dependent on how well the attack can discriminate
between disks containing hidden volumes and those that do not.
In particular, the rates of false positives and false negatives should both be very low for an attack to be considered successful.

As noted most deniable storage systems achieve their aims by encrypting and splitting data into redundant shares and writing these blocks uniformly
on the disk~\cite{barker-msst20, Pang2003}.
This increases the odds of survival of the files when the public file system, which is unaware of their existence, makes its own writes.
Other approaches write public and hidden data psuedorandomly so that public and hidden writes are indistinguishable~\cite{Blass2014, Chakraborti2017}.

The commonality between these approaches is that deniable storage systems make many uniform writes.
Depending on the size of the free space, writes made uniformly are very likely to result in isolated changes on disk,
which we call {\emph singletons}.
This is in contrast to normal file systems, which do not make their writes uniformly,
and are much more likely to make writes that are part of longer strings of consecutive changes that we call {\emph chains}.
We call a chain of $c$ consecutive changes a $c$-chain.

\begin{example}
Say that in a change record a 1 denotes a change and a 0 denotes no change between two snapshots.
Then in (\ref{ex:change-record}) there are two singletons (or 1-chains) and one 3-chain.

\begin{equation}\label{ex:change-record}
    \begin{tabular}{ |c|c|c|c|c|c|c| }
        \hline
        1 & 0 & 1 & 0 & 1 & 1 & 1\\
        \hline
       \end{tabular}
\end{equation}
\end{example}

An adversary observing the change records produced from a pair of disk snapshots would be able to observe the lengths of the chains those changes produce.
Crucially if the disk contains a hidden volume, the adversary would also see changes made by writes to that volume.
In Appendix \ref{appendix} we give the theoretical distribution of chains given the size of the disk and the number of changes made.
We find they are distributed according to lemma \ref{change-lemma}.
Fig. \ref{fig:prob-cons-changes} shows that as the free space grows relative to the number of writes, the probability of a singleton increases.
Fig. \ref{fig:prob-over-dataset} shows that for real disks, the probability of a singleton is much smaller, and the tail of the distribution is typically much heavier.
Together they show the disparity between the distributions of chains due to a hidden volume and the chains due to a public file system.
This becomes more pronounced as the hidden volume makes more writes.
Our task then is to construct features to distinguish between the distribution of consecutive changes made by a public file system and changes made by a public file system {\it and} a hidden volume.

To carry out this attack, we assume that the attacker has access to a large set of disk images, both from disks that contain hidden volumes and from those that do not.
Images in this set will be organized into pairs of images from the same disk at different points in time. Comparing these pairs of images results in a change record for a given disk.
We will call pairs that do not have an instance of a hidden volume clean and those that do dirty.
Assuming our adversary is well-funded and motivated these data requirements are easily attainable.

Since for our proposed attack we only need to determine whether individual blocks have changed,
we can take snapshots of the clean and dirty disks in a space efficient manner by hashing each block on the disk, and constructing a Merkle tree \cite{merkle1987digital} over the hashed blocks.
This gives us a very efficient method of finding changes, and producing change records.
These change records are further processed into lists of integers, $D_i$, recording the lengths of chains found in each change record.
We will denote these lists of chains $\{D_i \} = \mathcal{D}$.

From $\mathcal{D}$ the adversary has several options for constructing an arbitrary $n$ number of features for use in a classification algorithm.
The first is to remove a set of clean disks that we will call $\mathcal{C}$.
Using $\mathcal{C}$ the adversary estimates the probability of $c$-changes from $1$ to $n$, the number of features, for each $D_i$ by counting the occurrences of each $c$-chain and dividing by the total number of chains in $D_i$.
Using these probabilities on $\mathcal{D} - \mathcal{C}$, the adversary can then estimate the probability of a disk containing more than $k$ $c$-chains with the cumulative distribution function (CDF) of the binomial distribution,
$F(k; n, p_c)$ where $n$ is the length of the change record and $p_c$ is the estimated probability of a chain of $c$ consecutive changes.
We will denote the event of an adversary observing $k$ consecutive changes of length $c$ as $X_c$.
Then,
\begin{equation} \label{binom-cdf}
P(X_c > k) = 1 - F(k; n, p_c).
\end{equation}
Using these values we construct our final features $\mathcal{F}$ for $\mathcal{D}$.
This method has greater sensitivity to small variations in probabilities for small disks;
however, on large disks it tends to underflow when computing equation (\ref{binom-cdf}).
We give pseudocode for this in algorithm (\ref{feature-construction}).

For large datasets, we take a simpler approach to computing $\mathcal{F}$ by
computing the probabilities of chains of length $1$ to $n$ for each $D_i$ and feeding these to our classifier.

Recall that the dataset is entirely constructed by the adversary,
so it has ground truth labels describing whether each row in $\mathcal{F}$ corresponds to a disk containing a hidden volume or not.
The adversary now trains a supervised classification algorithm on $\mathcal{F}$ split into standard train and test sets.
On new pairs of disks the adversary runs through the feature construction process, then runs the classification algorithm on those feature and responds accordingly.
Furthermore, if the adversary can confirm that some disks did in fact contain a hidden volume, it can update its model using various online learning techniques \cite{shalev2011online}, further refining its model.

As a note on selection of the number of features,
we observe from lemma \ref{change-lemma} that when the free space on a disk is large relative to the number of writes, the vast majority of writes will result in singleton changes.
As a consequence, the adversary could learn based on a single feature derived from singletons.
This may be desirable in some situations for the sake of efficiency;
however, \afs could simply be modified so that when writing blocks they would be grouped in chains of two or more, thereby defeating the attack as described.
The adversary can in turn thwart this countermeasure by increasing the number of features, $n$. We go into more detail regarding this problem in Section \ref{mitigation}.

\begin{algorithm}
\SetAlgoLined
\KwIn{$\mathcal{D}$, a set of processed disks; $\{p_1, ..., p_c\}$, the estimated probabilities of consecutive changes of length $1...c$.}
\KwOut{$\mathcal{F}$, a $|\mathcal{D}| \times c$ matrix.}
    $\mathcal{F}:=\{\}$\\
    \ForEach{$D_i \in \mathcal{D}$}{
        $f := \{\}$\\
        \ForEach{$c_i \in \{1...c\}$}{
            $k := $ number of $c$-consecutive changes in $D_i$\\
            $a := 1 - F(k;|D_i|,p_{c_i})$\\
            $\mathsf{append}(a, f)$
        }
        $\mathsf{append}(f, \mathcal{F})$
    }
    $\mathsf{return}$ $\mathcal{F}$
 \caption{Feature Construction. This method is suitable for smaller disks, where greater sensitivity is required, but suffers from underflow on large disks.}\label{feature-construction}
\end{algorithm}

\section{Data Collection and Experiment Methodology} \label{data-collection}

As noted previously, one of the challenges of carrying out a multiple snapshot attack is the availability of pairs of disk images.
These are necessary to learn what normal chains look like. While collecting hundreds, or thousands, of disk images may be feasible for a nation state level adversary (or a large IT department),
we were unable to collect such a large amount of data.

Instead, we have collected several months worth of snapshots from a 1\,TB NVMe SSD formatted with \texttt{ext4} and in use as the boot disk of a desktop computer running Ubuntu 18.04.
We collected 53 snapshots in total, giving us 52 change records\footnote{This data is available at \href{https://files.ssrc.us/data/disk-change-data.zip}{https://files.ssrc.us/data/disk-change-data.zip}. The code for these experiments is available at \href{https://github.com/ucsc-ssl/multiple-snapshot-attack}{https://github.com/ucsc-ssl/multiple-snapshot-attack}.} which we have made publicly available.
By observing the distribution of lengths of changes over our collected data (Figure~\ref{fig:prob-over-dataset}),
and the theoretical distribution of consecutive changes (Figure~\ref{fig:prob-cons-changes}) when changes are made uniformly,
the potential strangeness of a disk running a deniable volume becomes clear.

Because just 52 data points are insufficient to train a classifier, and moreover, would be inconclusive regarding performance of that classifier,
we instead used this data to generate a synthetic dataset on which to train and test our classifier.
While in the real world different file systems may produce different patterns of changes,
this does not change the reality that a deniable volume writing many single blocks,
and thereby causing many singleton chains in the change record,
would be considered abnormal regardless of the public file system in use.
Though the fact that our attack only utilizes data from \texttt{ext4} file systems is a limitation,
no file system in widespread use writes blocks randomly, so we expect our attack will generalize to other data sources.

Our experiments are conservative in terms of the operational security measures the user of a hidden volume might take.
We discuss them here to motivate our experimental design.

There are several things that the user of a deniable volume could do to decrease the odds of detection.
To start, assume that a single snapshot has been taken, and no deniable volume yet exists on the drive.
A prudent user would make many changes to the disk through the public file system.
The reason for this being that if there are overwhelmingly many chains distributed according to normal disk behavior,
the singletons made by writing to the hidden volume could be made to look like noise.
To illustrate this, consider a user that does not produce a single change through the public file system after the deniable volume is created and written to.
In this case the adversary would see, after taking a second snapshot and computing the differences, only chains produced by the deniable volume.
These chains would be principally singletons, which would surely be conspicuous.
As an extreme measure the user could simply wipe the disk and then create the deniable volume, but this may be considered suspicious or may be undesirable for other reasons.
As a less drastic alternative, the user could produce an overwhelming number of changes to the disk through the public file system.
In our experiments we chose this middle ground by using our real data to simulate 25\,GB worth of public changes
on a 1\,TB disk with 100\,GB of free space.
This produces a sufficient number of public changes to hide private changes while still behaving as a normal user might.

Each pair of disk snapshots can be regarded as producing a distribution over consecutive change lengths,
so in order to construct our synthetic dataset we simply draw chains from these distributions.
Realistically, the fraction of disks containing hidden volumes would be relatively small, and the size of hidden volumes would also be variable.
For our hidden volumes, we assume that we have instances that are from 250\,MB to 1.25\,GB in increments of 250\,MB.
In addition to its realism, this allows us to determine a point at which the number of uniform writes becomes conspicuous.
For our \afs parameters we chose those that minimized the number of writes, while achieving survival probabilities over 80\% with 25\,GB of cover changes.
This led us to copy data blocks 6 times, where the survival of a single block is sufficient for reconstructing the data.
Since our adversary is able to generate an arbitrary number of disk snapshots with and without hidden volumes,
in order to allow our classifier to learn to distinguish disks more quickly our training set contains an even split of disks with and without hidden volumes.
However, to reflect the rarity of hidden volumes in the real world only 5\% of our test set contains disks with a hidden volume.
We generate a training set of size 10000 and a test set of size 2500.
We repeat this generation, training and testing cycle 100 times to ensure the reliability of our results.

\section{Results}\label{results}
By implementing the experimental methodology described in Section~\ref{data-collection} and
running it against our dataset we collected a set of results
that show the efficacy of our proposed multiple snapshot attack.
It should be noted from the start that at the core of our implementation is a
simple logistic regression that takes only the probability of a single block change on the disk into account.
In choosing to implement such a simple learning algorithm we highlight the distinguishing power of analyzing consecutive block changes in
detecting anomalous disk behavior.

We collected five different metrics on our classifier: accuracy, precision, recall, false positive rate, and false negative rate.
Because only 5\% of our test set contains \afs instances accuracy is not a very informative metric, and we include it for completeness only.
Precision is the ratio of true positives to predicted positives.
Recall is the ratio of true positives to true positives and false negatives,
giving the ratio of \afs instances that were identified from the test set.
False positive rates and false negative rates are useful for understanding how frequently our classifier makes errors in both directions.
We consider false classification rates to be the most important metrics for an attacker.

In our experiments the 250\,MB \afs instances were often able to pass undetected, implying that 25\,GB of cover changes were sufficient to hide these volumes.
However, the largest four sizes were reliably detected,
with the three largest sizes, 0.75\,GB, 1.0\,GB and 1.25\,GB, being detected nearly 100\% of the time.
This highlights a feature of our attack, namely that for 25\,GB of cover changes every \afs instance above a certain size will be detected with high probability.
This is because of our use of logistic regression, and because the probability of singletons is so overwhelming.
Eventually as free space fills up, \afs will begin to make changes that are parts of longer chains, but if the reconstruction threshold is low, this will severely impact the survivability of the volume.

\begin{figure}[t]
    \centering
    \def\svgwidth{\columnwidth}
    \scalebox{1.0}{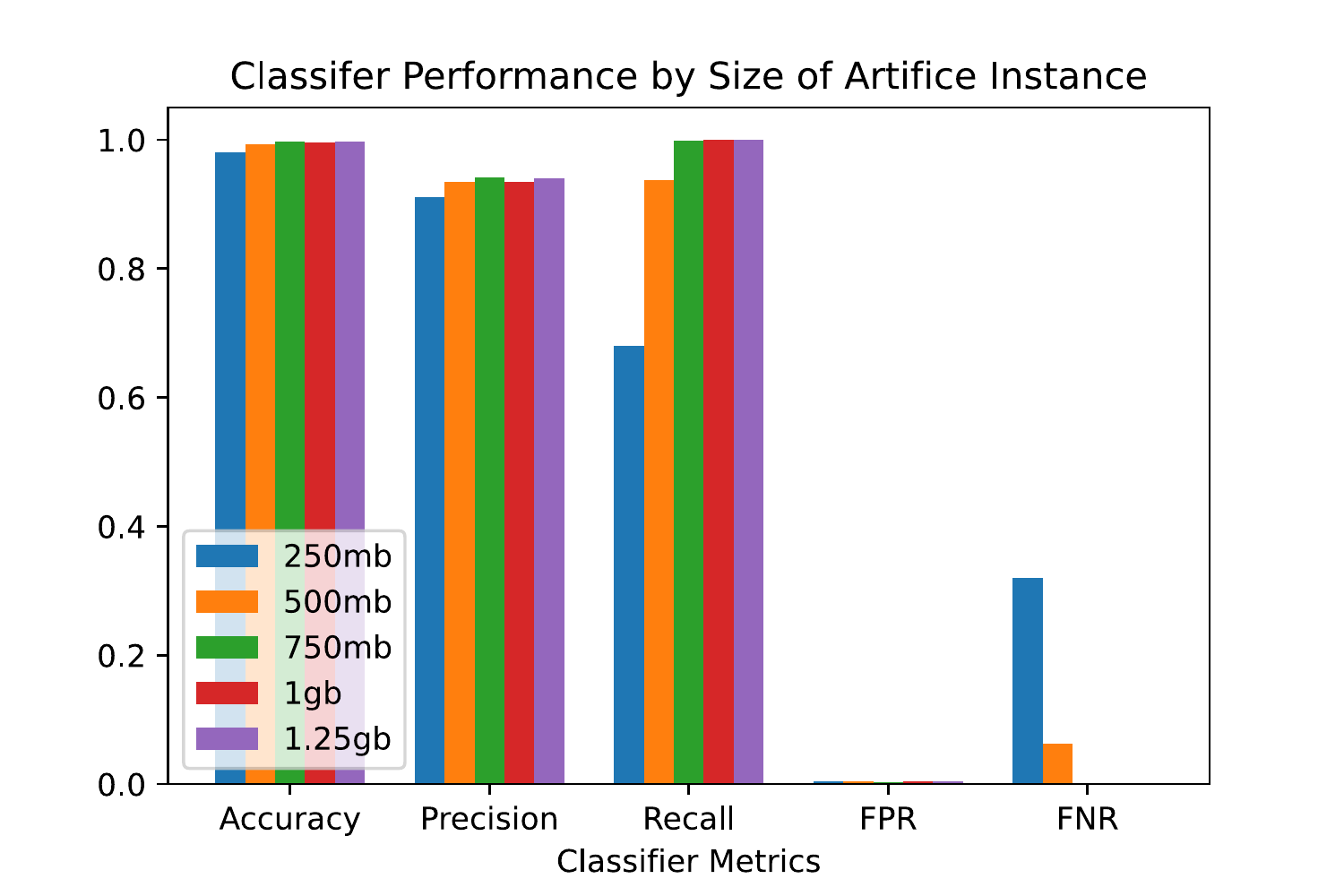}
	\caption{Hidden volumes in excess of 0.75\,GB are always identified successfully.}
    \label{fig:results}
\end{figure}

The presence of a point where the FNR becomes negligible also offers an explanation for the relatively constant false positive rate.
This being that a certain percentage of simulated clean disks will naturally have disproportionately many singletons and thus get misclassified as containing an \afs volume.
Interestingly, there were very few disks that naturally had enough singletons to exceed the learned threshold.

Future work may combine the features we use in this attack with other features. Such as what proportion of
changes are made to blocks in free space versus allocated blocks. The introduction of more features would
serve to better characterize disk behavior and further improve the efficacy of the attack.

\begin{figure}[t]
\begin{center}
    \begin{tabular}{ c c c c c c }
        Size (GB) & Acc. & Precision & Recall & FPR & FNR \\
        \hline
        0.25 & 0.981 & 0.911 & 0.680 & 0.004 & 0.320 \\
        0.5 & 0.993 & 0.934 & 0.937 & 0.004 & 0.063\\
        0.75 & 0.997 & 0.942 & 0.999 & 0.003 & 0.001 \\
        1.0 & 0.996 & 0.935 & 1.0 & 0.004 & 0.0 \\
        1.25 & 0.997 & 0.939 & 1.0  & 0.004 & 0.0
    \end{tabular}
\end{center}
\caption{Numeric results from our experiments. These results are averaged over 100 runs. Furthermore, 95\% confidence intervals for all figures vary only in the thousandths.}
\label{fig:results-table}
\end{figure}

\section{Attack Mitigation} \label{mitigation}

%This whole paragraph needs to be rewritten
%This attack is fundamentally concerned with the distribution of how writes are being made to a disk.
%Depending on the number of features an adversary is using,
%the designer of a deniable system could simply write everything in blocks that are not being monitored by the adversary.

Artifice's authors propose an operational security based approach to defend against multiple snapshot attacks~\cite{barker-msst20}. The rationale being that if the user can produce a
deniable reason for changes to the entire disk, such as re-installing the operating system or defragmentation, then the adversary's previously gathered data would
be rendered useless. This defense relies heavily on the ability of the user to out-maneuver the adversary and will not always be practical. As a result, it would be prudent to develop
some other countermeasures against our proposed attack and future types of snapshot analysis.

We can observe in Figure \ref{fig:results} that the FNR of our classifier decreases as the effective size of the \afs instance increases while the amount of data written
by the public volume remains the same. With this in mind we can conclude that the operational strategy proposed in Section \ref{data-collection}, to keep the proportion of hidden to public writes
sufficiently skewed in favor of the latter, is a viable strategy.
Unfortunately, this severely limits the effective size of \afs.
%Additionally to eliminate adverse effects on the integrity of the hidden volume due to
%more public writes, the user can write to the hidden volume once a sufficient number of public writes have been made to disguise it.

While our proposed mitigation technique is a promising means of defense against snapshot analysis it is still worthwhile to explore the possibility of a mechanism that does not
require such close user involvement. In order for a deniable storage system to defend against a snapshot attack without operational measures, the designer of such a system could attempt to mimic
the distribution of writes that the public file system is making. Similar to our operational approach, the survivability of a hidden volume may be impacted by changing the distribution
of writes from uniform to something more closely resembling a normal file system.
In the case of \afs, the reason that it writes blocks uniformly is to ensure that any writes made to the public file system are unlikely to destroy
too many \afs blocks. As such, system designers need to be careful when implementing countermeasures to maintain a high probability of survival when using the
public file system while still successfully mimicking its write behavior.

In our proposed attack, we use only one feature associated with single block changes to the disk.
A naive approach would be to simply write all blocks in pairs, producing consecutive changes of two blocks.
This would defeat our attack, but the adversary could expand its feature set to include changes two blocks long and so on, forcing the user of the hidden volume to mimic the public file system or risk detection.
Furthermore, given the myriad measurable features in the average file system and that the adversary is unlikely to publish details of its attack,
we can conclude it would be difficult to accurately determine what features are and are not being tracked by the adversary, making mimicry the safest option.

In order to accurately mimic expected access patterns, much more work needs to be done to quantify exactly what a change pattern for a disk without a deniable volume looks like.
A significant body of work has been published in the context of network steganography and pattern mimicking cryptography and could be used
to inform designers of future multiple snapshot resistance systems.
For instance, initial work into format transforming encryption by Dyer\etal~\cite{dyer2013protocol} showed that it was possible to efficiently encrypt data so that it would conform
to a target regular expression. An application of this is found in censorship resistant networking. In this case a user might be running a blacklisted protocol, such as Tor,
but transforms the protocol messages in such a way that they look like HTTPS. Similar techniques could be applied to deniable
storage to disguise suspicious write patterns. Unfortunately, there is also evidence that more capable adversaries utilizing more sophisticated
attacks can easily identify these false protocols. Houmansadr \etal~\cite{Houmansadr2013} argue that it is unlikely that one could mimic a protocol perfectly without running
the actual protocol, because there are often sub-protocols one would also need to mimic or differences in implementations that allow for version fingerprinting.
Assuming snapshot analysis becomes more sophisticated it is likely that mimicry techniques applied to deniable storage would also need to evolve.

Since \afs relies on pseudorandom data in free space and the use of secure deletion utilities to produce cover changes,
one way to potentially sidestep the issues of artificial mimicry could be found in actually using these deleted files.
\afs could keep track of changes in free space on the public file system, and when it sees a block added to free space it could overwrite this with an \afs block.
As noted above this is not uniformly written and so may risk corruption of files, but it also may provide stronger mimicry guarantees than other methods.
We leave it to future work to investigate this and the other techniques we have presented for mitigation.

\section{Conclusion}
We have demonstrated the first implemented multiple snapshot attack against deniable storage systems.
In doing so we showed the broad usefulness of change records to an adversary seeking to detect hidden volumes.
Furthermore, we presented a concrete way to analyze changing disks, even when the contents and sizes of those disks are radically different.
Through our data collection we were able to show that this measure of computing consecutive change lengths of a disk is relatively stable.
For our data we collected over fifty images of \texttt{ext4} disks and used records of changes across these images to train a classifier on probabilities of chains.
We demonstrated that this classifier is able to differentiate between disks containing a hidden volume from disks without a hidden volume in a variety of configurations. In the process
we have also identified limitations to our technique and from those limitations have proposed possible countermeasures to our attack.

As future work we would like to gather substantially more data from more varied sources.
In addition to covering the major file systems, collecting data from many different types of computer users would give us greater confidence in the stability of our metrics.
Finally, the attack we propose has great potential for expansion.
Many more features could be included, such as number of writes and location of writes in free and used space.
Adding additional features would be trivial, and have the potential to further improve the performance of the attack,
lowering the false positive rate, and the size of hidden volumes that can be expected to evade detection.

%\section*{Acknowledgments}
%This research was supported in part by the National Science Foundation grant number IIP-1266400, award CNS-1814347, and by the industrial partners of the Center for Research in Storage Systems.

\Urlmuskip=0mu plus 1mu\relax
\bibliographystyle{IEEEtran}
\bibliography{bibliography}

%\clearpage

\appendix

\subsection{Theoretical Probability of Consecutive Changes} \label{appendix}

Suppose we have an array, $A$, of size $n$ and we make $k$ changes to it at random, where $1$ denotes a change and $0$ denotes no change.
What is the probability of selecting a chain of exactly $c$ consecutive changes?
We give an example and then give the general statement and proof.

\begin{example}\label{a:example}
    When $n=7$ and $k=4$, what is the probability of drawing a chain of length 2 from the array?

    In this case it is feasible to enumerate all ${7 \choose 4} = 35$ possible arrangements of the disk.
    Doing so we see there are 12 ways to get arrays with one chain of length $2$ and two chains of length $1$.
    We also see that there are 6 ways to get arrays with two chains of length $2$.
    Therefore, $\Pr(C = 2; n = 7, k = 4) = \frac{12}{35}\cdot \frac{1}{3} + \frac{6}{35} \cdot \frac{2}{2}$
\end{example}

For larger values of $n$ and $k$ this quickly becomes infeasible.
Instead, we describe a method whose complexity only depends on $k$.

\begin{lemma}\label{change-lemma}
Let $A$ be an array of size $n$ with $k \leq n$ entries made at random.
Then the probability of a chain of $c$ consecutive changes in $A$ is
\begin{equation}
    \Pr(C = c) = \sum_{p \in P} \frac{{n - k + 1 \choose |p|}}{{n \choose k}}\Pr(C = c \mid p),
\end{equation}
where $P$ is the set of partitions of $k$, $|p|$ is the number of elements in a partition $p$, and $\Pr(C = c \mid p)$ is the probability of $c$ in a partition $p$.
\end{lemma}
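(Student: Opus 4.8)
The model implicit in Example~\ref{a:example} is that one first samples $A$ uniformly among the $\binom{n}{k}$ arrangements of $k$ ones, then samples one of its maximal runs of ones uniformly at random, and $C$ is the length of that run. The plan is to apply the law of total probability, conditioning on the \emph{run profile} of $A$, namely the multiset $p$ of lengths of its maximal runs. Once the profile is fixed, the drawn chain is a uniform pick among $|p|$ runs, so $\Pr(C = c \mid p) = m_c(p)/|p|$, where $m_c(p)$ denotes the number of parts of $p$ equal to $c$. Thus $\Pr(C = c) = \sum_{p \in P} \Pr(p\text{ realized})\,\Pr(C = c \mid p)$, and only $\Pr(p\text{ realized})$ remains to be computed.

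For that step I would use stars and bars. Laying down the $n-k$ zeros first creates $n-k+1$ gaps (including the two ends); a string whose run profile has $|p|$ parts is specified by choosing which $|p|$ of these gaps receive runs and then assigning the run lengths from left to right, i.e.\ choosing a composition of $k$ that rearranges $p$. There are $\binom{n-k+1}{|p|}$ choices of gaps and $\frac{|p|!}{\prod_{j\ge1} m_j(p)!}$ such compositions, and distinct combinations give distinct strings, so the number of strings with profile $p$ equals $\binom{n-k+1}{|p|}\cdot\frac{|p|!}{\prod_{j} m_j(p)!}$ and $\Pr(p\text{ realized})$ is this divided by $\binom{n}{k}$. (Equivalently, indexing the outer sum by compositions of $k$ rather than by partitions removes the multinomial factor and leaves exactly the coefficient $\frac{\binom{n-k+1}{|p|}}{\binom{n}{k}}$ that appears in the statement.) Substituting into the law of total probability and cancelling the $|p|$ in $m_c(p)/|p|$ against the $|p|!$ gives the claimed closed form. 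As a sanity check, $\sum_{r\ge1}\binom{k-1}{r-1}\binom{n-k+1}{r}=\binom{n}{k}$ by Vandermonde, so the induced distribution on profiles is normalized, and specializing to $n=7$, $k=4$, $c=2$ recovers the counts $12$ and $6$ of the example.

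I expect the main obstacle to be purely combinatorial bookkeeping: keeping the partition-versus-composition distinction consistent and placing the multinomial ordering factor correctly, since the probabilistic core — condition on the profile, then pick a run uniformly — is routine. A minor loose end is the degenerate regime $k\in\{0,n\}$, where $A$ has $0$ or $1$ run and the relevant sum is empty or a single term; the formula then holds under the convention that an empty sum is $0$ and contributes only when $c=k$.
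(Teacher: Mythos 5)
Your proof is correct and follows essentially the same route as the paper's: condition on the run-length profile of the array, compute $\Pr(C=c\mid p)$ as a uniform pick among the $|p|$ maximal runs, and count the arrays realizing each profile by stars and bars, arriving at the coefficient $\binom{n-k+1}{|p|}/\binom{n}{k}$ per composition. You are in fact more careful than the paper on two points it glosses over: the statement's $P$ must be read as \emph{ordered} partitions (compositions) for the stated coefficient to be correct without your multinomial factor --- which is what the paper's proof implicitly does --- and your gap count correctly yields $\binom{n-k+1}{|p|}$, whereas the final displayed equation of the paper's proof contains the inconsistent $\binom{n-k-1}{|p|}$.
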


\begin{proof}
Let $A$ be an array of length $n$ with $k$ changes made uniformly at random.
Then $A$ can be represented as $p = (p_1, p_2, ..., p_k)$, an ordered partition of $k$,
where each $p_i \in \mathbb{Z}$ represents the $i$-th string of $p_i$ consecutive $1$s separated by one or more $0$s.
Let $|p|$ be the length of the partition of $k$.
By our construction of $p$, $A$ is uniquely represented by $p$ and
\begin{equation}\label{partion-k}
    p_1 + p_2 + ... + p_{|p|} = k.
\end{equation}

Since any array $A$ can be represented by $p \in P$, where $P$ is the partition of $k$, we can compute $\Pr(C = c)$ as
\begin{equation}\label{marginal-probability}
    \Pr(C = c) = \sum_{p \in P} \Pr(C = c \mid p)\Pr(p),
\end{equation}
by marginalizing over $P$, the size of which is ${2k - 1 \choose k - 1}$.
We conclude the proof by computing $\Pr(p)$.
Since there are ${n \choose k}$ possible arrays, counting the number of arrays represented by $p$ is sufficient to compute $\Pr(p)$.

\begin{example}
    (continued from example \ref{a:example}) The ordered partitions of 4 are
    $$(4), (3,1), (1,3), (2,2), (2,1,1), (1,2,1), (1,1,2), (1,1,1,1).$$
    Taking $c=2$, we have $(2,2), (2,1,1), (1,2,1), (1,1,2)$ all contain at least one $2$.

    Notice that when $p = (2,1,1)$, the array will take the form $\star 110 \star 10 \star 1 \star$, where $\star$ represents zero or more $0$s.
    Therefore, there is a single $0$ whose location is not fixed by $p$, and there are ${4 \choose 1} = 4$ ways to place it.

\end{example}

Arrays that are represented by $p$ must have the form
\begin{equation*}
    \star \underbrace{11 \cdots 10}_{p_1} \star \underbrace{11 \cdots 10}_{p_2} \star \cdots \star \underbrace{11 \cdots 1}_{p_{|p|}} \star
\end{equation*}
by our construction.
Notice that there are $k$ $1$s and $|p|-1$ $0$s in the string above,
so there are $n - k - (|p| - 1)$ $0$s whose locations are unfixed.
There are ${n - k -1 \choose |p|}$ different ways to place these $0$s, thus
\begin{equation}
    \Pr(p) = \frac{{n - k - 1 \choose |p|}}{{n \choose k}},
\end{equation}
completing the proof.
\end{proof}

\end{document}